\DeclareFontFamily{OT1}{pzc}{}
\DeclareFontShape{OT1}{pzc}{m}{it}{<-> s * [1.10] pzcmi7t}{}
\DeclareMathAlphabet{\mathpzc}{OT1}{pzc}{m}{it}
\newtheorem{theorem}{Theorem}[section]
\newtheorem{lemma}[theorem]{Lemma}
\newtheorem{remark}[theorem]{Remark}
\providecommand{\N}{\mathbb{N}}
\providecommand{\R}{\mathbb{R}}
\providecommand{\SO}{\mathbf{SO}}
\providecommand{\SE}{\mathbf{SE}}
\providecommand{\grpG}{\mathbf{G}}
\providecommand{\gothg}{\mathfrak{g}}
\providecommand{\so}{\mathfrak{so}}
\providecommand{\calM}{\mathcal{M}}
\providecommand{\vecV}{\mathbb{V}}
\providecommand{\Id}{I} 
\providecommand{\td}{\mathrm{d}}
\providecommand{\tD}{\mathrm{D}}
\providecommand{\scirc}{%
    \hbox{\fontfamily{\rmdefault}\fontsize{0.4\dimexpr(\f@size pt)}{0}\selectfont{\raisebox{-0.52ex}[0ex][-0.52ex]{$\circ$}}}}
\mathchardef\mhyphen="2D
\providecommand{\etal}{\textit{et al.}~}
\begin{document}

\title{Attitude Observation for Second Order Attitude Kinematics}
\headertitle{Attitude Observation for Second Order Attitude Kinematics}

\author{
\href{https://orcid.org/0000-0002-7764-298X}{\includegraphics[scale=0.06]{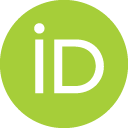}\hspace{1mm}
Yonhon Ng}
\\
Systems Theory and Robotics Group \\
Australian National University \\
ACT, 2601, Australia \\
\texttt{yonhon.ng@anu.edu.au} \\
\And	
\href{https://orcid.org/0000-0003-4391-7014}{\includegraphics[scale=0.06]{orcid.png}\hspace{1mm}
Pieter van Goor}
\\
Systems Theory and Robotics Group \\
Australian National University \\
ACT, 2601, Australia \\
\texttt{pieter.vangoor@anu.edu.au} \\
\And	
\href{https://orcid.org/0000-0002-7803-2868}{\includegraphics[scale=0.06]{orcid.png}\hspace{1mm}
Robert Mahony}
\\
Systems Theory and Robotics Group \\
Australian National University \\
ACT, 2601, Australia \\
\texttt{robert.mahony@anu.edu.au} \\
\And	
\href{https://orcid.org/0000-0002-7779-1264}{\includegraphics[scale=0.06]{orcid.png}\hspace{1mm}
Tarek Hamel}
\\
I3S (University C\^ote d'Azur, CNRS, Sophia Antipolis)\\
and Insitut Universitaire de France\\
\texttt{thamel@i3s.unice.fr} \\
}

\maketitle

\begin{abstract}
	This paper addresses the problem of estimating the attitude and angular velocity of a rigid object by exploiting its second order kinematic model.
	The approach is particularly useful in cases where angular velocity measurements are not available and the attitude and angular velocity of an object need to be estimated from accelerometers and magnetometers.
	We propose a novel sensor modality that uses multiple accelerometers to measure the angular acceleration of an object as well as using magnetometers to measure partial attitude.
	We extend the approach of equivariant observer design to second order attitude kinematics by demonstrating that the special Euclidean group acts as a symmetry group on the system considered.
	The observer design is based on the lifted kinematics and we prove almost global asymptotic stability and local uniform exponential stability of the estimation error.
	The performance of the observer is demonstrated in simulation.
\end{abstract}


\section{INTRODUCTION}
This paper explores the estimation of both the attitude and angular velocity of a rigid object using only accelerometer and magnetometer sensors attached rigidly at known locations on the body.
In particular, the approach taken does not require gyroscope measurements of angular velocity.
The solution to this problem has numerous applications where deployment of gyroscope MEMS devices is not possible due to size constraints, speed of rotation (saturating a MEMS gyroscope), impact characteristics (destroying/destabilising a MEMS gyroscope), vibration, \textit{etc}.
Example applications include complex acrobatic maneuvers of unmanned aerial vehicles~\cite{Lee10}\cite{Richter16}, detumbling maneuvers for satellite deployment~\cite{Ahmed14}\cite{Shou10}, and tracking of fast-moving sporting equipment~\cite{Zhang12}. \cite{Lee10}
A gyroscope is also more prone to malfunction than accelerometer and magnetometer devices, a famous example being the Hubble space telescope that required expensive replacement of the faulty gyroscopes in two separate occasions~\cite{Berkane18}.

The subject of attitude estimation has a rich history of research~\cite{Markley05}.
Different kinematic models and attitude representations have been considered including; Euler angles~\cite{Foxlin96}\cite{Setoodeh04}\cite{Han11}, quaternions~\cite{Lefferts82,Martin10}, and rotation matrices~\cite{Mahony08}\cite{Mahony09}.
Euler angles provide a minimal state representation, however, they have a  singularity problem when the pitch angle passes through $\pm \pi/2$~\cite{Stuelpnagel64}.
Quaternion representations have an ambiguity in the sign, where the same rotation can be represented as $q$ or $-q$.
With the power of modern compute hardware, even for embedded applications, the computational advantages of using the Euler angle and quaternion representations are marginal and the direct matrix representation provides a powerful and simple formulation that we exploit in the present paper.
Existing work can also be classified as either filtering-based~\cite{Lefferts82,Setoodeh04}, or nonlinear observer-based~\cite{Mahony08,Martin10,Hua14} methods.
Filtering methods are based on stochastic principles and provide estimates of uncertainty, however, they are complex (involving propagation of covariance matrix estimates) and typically do not provide the same convergence guarantees and simplicity that are provided by observer-based methods.

The majority of existing attitude estimation algorithms rely on the use of gyroscopes to measure the angular velocity of the rotating body.
Lizarralde and Wen~\cite{Lizarralde96} proposed the use of a nonlinear filter of the quaternion to remove the need for angular velocity feedback.
Berkanel \etal~\cite{Berkane18} presented a hybrid attitude and angular velocity observer that does not rely on gyroscope measurement.
These methods however rely on the knowledge of the inertia matrix and input torque, which are then used to compute the attitude and angular velocity.
Recently, Magnis and Petit~\cite{Magnis17} proposed the use of multiple vector measurements to recover the angular velocity.
They extended the state space to include the estimation of the vector measurement along with the angular velocity.

In this paper, we approach the problem from the perspective of nonlinear observer theory and seek a simple and robust innovation that guarantees almost global stability of the attitude estimate (note that global stability for a continuous estimation algorithm is prevented by topological constraints of the state space \cite{2000_Bhat_SCL}). 
We propose the use of multiple strategically placed accelerometers and magnetometers to obtain a robust estimation of a rigid object's (\emph{e.g.} unmanned aerial vehicle (UAV)) attitude and angular velocity, without relying on the gyroscope. 
A key innovation in the paper is to model the second order kinematics of the attitude system (rather than its dynamics) and to show that these kinematics can be treated using the machinery of equivariant observer design. 
The paper makes the following contributions: 
\begin{itemize}
    \item We propose a novel sensor modality that indirectly measures angular acceleration and attitude using multiple accelerometers and magnetometer.  The sensitivity to the angular velocity and acceleration can be adapted for specific application by varying the separation between the accelerometers.
    \item We derive a second order equivariant attitude observer. To our knowledge this is the first deliberate application of the theory of equivariant observer design to a second order kinematic system.
    \item We propose a simple observer for robust attitude and angular velocity estimation that does not require angular velocity measurements.
    \item We prove almost global asymptotic convergence and local uniform exponential convergence of the estimation error for the proposed observer.
\end{itemize}

The paper is organised as follows.
Section I presents the introduction and literature review of related work, Section II discusses the problem formulation where we present the state and velocity space, system kinematics and sensor measurements used.
Section III discusses the symmetry group of the attitude system, group actions and the associated system lift.
Section IV presents our observer design and stability analysis.
Section V presents the simulation results, followed by conclusions in Section VI.

\section{PROBLEM FORMULATION}
Let $\calM$ and $\N_i$ for $i = 1, \dots, p$ be finite-dimensional smooth real manifolds that are termed, respectively, the state and output spaces. Let $\mathbb{V}$ denotes the finite-dimensional real vector space that is termed the input (velocity) space. A kinematic system is defined by the state equations
\begin{subequations}
\label{eq:kinematics}
\begin{align}
    \dot{x} &= f(x, v), \\
    y_i &= h_i(x),
\end{align}
\end{subequations}
for smooth dynamic function $f: \calM \times \mathbb{V} \to T\calM$, with $f(x,\cdot): \mathbb{V} \to T_x \calM$ a linear map, and smooth output maps $h_i : \calM \to \N_i$.

We are interested in the second order attitude kinematics that would normally be expressed on $T \SO(3)$. 
By using the standard left trivialisation of the velocity space, a tangent vector $R \Omega^\times \in T_R \SO(3)$ is identified with $\Omega^\times \in \so(3)$. 
In this manner, we consider the state of our system as an element of the product manifold
\begin{align}
    \calM = \SO(3) \times \so(3). \notag
\end{align}
The associated input (velocity) space is the tangent $T_R \SO(3)$ to $\SO(3)$ at $R \in \SO(3)$ and that tangent space $T \so(3)$ to $\so(3)$ at a point $\Omega^\times \in \so(3)$.  
Since $\so(3)$ is a linear space, then $T \so(3) \equiv \so(3)$. 
The tangent space $T_R \SO(3)$ can be left trivialised to $\so(3)$ analogously to the approach taken above.  
Thus, the input space that we consider is based on this construction for a parametrization of the tangent space to $\calM$.  
That is a space $\vecV = \so(3) \times T \so(3)$ where we preserve the $T \so(3)$ notation to make clear the part of the velocity space that is modelling the second order part of the kinematics.
An element $v = (\pi^{\times} , \theta^{\times}) \in \vecV \equiv \so(3) \times T\so(3)$ in the input space can be thought of as two independent elements $\pi^\times \in \so(3)$ and $\theta^\times \in T \so(3) \equiv \so(3)$. 

For an element $x = (R,\Omega^{\times}) \in \calM$, and input $v = (\pi^{\times} , \theta^{\times}) \in \vecV \equiv \so(3) \times T\so(3)$, the kinematics are given by
\begin{subequations}
\label{eq:full_kinematics}
\begin{align}
    \dot{R} &= R(\Omega^{\times} + \pi^{\times}), \\
    \dot{\Omega}^{\times} &= \theta^{\times}.
\end{align}
\end{subequations}
Note that the natural behaviour of the system is recovered by measuring the angular acceleration input $\theta^\times$ and by setting $\pi^\times \equiv \mathbf{0}$.
That is, the input angular velocity $\pi^\times$ that acts on the first order state kinematics is zero in the natural system kinematics. 
A key property of the kinematics \eqref{eq:full_kinematics} is the presence of the drift term 
\[
f((R,\Omega), \mathbf{0}) = (R \Omega^\times, \mathbf{0}) 
\]
for $v \equiv \mathbf{0} \in \vecV$ . 
That is, the system function is affine in the input $v \in \so(3) \times T\so(3)$ not linear as has been considered in most prior works~\cite{Bonnabel08, Mahony08, Mahony13} in equivariant observer design.

\begin{remark}
The majority of the work on equivariant observer design has been done for first order kinematics systems 
\[
\dot{X} = X V_1 + V_2 X 
\]
where $X \in \grpG$ is in some Lie-group and  $V_1, V_2 \in \gothg$ are some measured velocities.
In such cases there is no drift term, and the system function is linear with respect to the measured velocity.
The more general case of second order kinematics is implicit in some of the early work on velocity-aided attitude \cite{Bonnable09} since the second order velocity kinematics are present in the model. 
However, in these works the attitude kinematics are not modelled, only the second order attitude (velocity) kinematics and the first order attitude (state) kinematics. 
As such the drift term in \eqref{eq:full_kinematics} is not present. 
The authors are not aware of any other works that use equivariant observer techniques for second order kinematic systems. \hfill$\Box$
\end{remark}

The first component of the state $x = (R,\Omega^{\times}) \in \SO(3) \times \so(3)$ is the attitude of the object with respect to an inertial frame.
The columns of $R$ can also be seen as the coordinate basis of the body-fixed frame in the inertial frame. 
The second component of the state $x = (R,\Omega^{\times}) \in \SO(3) \times \so(3)$ is the angular velocity $\Omega$ expressed in the body-fixed frame $\{B\}$.
The input velocity $v$ of the system kinematics
\begin{equation} \label{eq:system}
    \dot{x} = f(x,v) = (R(\Omega^{\times} + \pi^{\times}), \theta^{\times})
\end{equation}
is the input angular velocity and the angular acceleration of the rotating UAV with respect to the inertial frame, expressed in the body fixed frame $\{B\}$. 

\subsection{Measurements of Angular Acceleration}
The measurement system proposed in this paper involves at least four non-coplanar accelerometers and one magnetometer placed on a rigid body at known displacement from the centre of mass.
We denote the body-fixed frame $\{B\}$ and assume that the accelerometers are aligned (in orientation) with $\{B\}$.
In practice, calibration may be required to find a transformational mapping from the orientation of each accelerometer to the body-fixed frame.
Denote the measured acceleration at each accelerometer $a_i$ for $i=1,...,n$. Then we have
\begin{align}
    \label{eq:accelerometer-original}
    a_i &= a_0 + \dot{\Omega} \times r_i + \Omega \times ( \Omega \times r_i)  \\
    &= a_0 + \dot{\Omega}^{\times} r_i + \Omega^{\times} \Omega^{\times} r_i \nonumber
\end{align}
where $r_i$ is the position of accelerometer $i$ with respect to $\{ B \}$, $a_0$ is the linear acceleration of $\{ B \}$, $\dot{\Omega}$ is the angular acceleration of $\{ B \}$, and $\Omega$ is the angular velocity of $\{ B \}$.

We propose to place the accelerometers at $r_0 = [0,0,0]^\top$, $r_1 = [l,0,0]^\top$, $r_2 = [0,l,0]^\top$ and $r_3 = [0,0,l]^\top$ as shown in Fig.~\ref{fig:Tso(3)}. Thus, the difference between accelerometer $i$ and accelerometer $0$ is
\begin{align*}
    a_i - a_0 = \dot{\omega} \times r_i + \Omega^{\times} \Omega^{\times} r_i
\end{align*}
and substituting values of $r_i$,
\begin{align*}
    a_1 - a_0 &= l \left( \begin{bmatrix}
    0 \\
    \theta_z \\
    - \theta_y
    \end{bmatrix} + \begin{bmatrix}
    - {\Omega_y}^2 - {\Omega_z}^2 \\
    \Omega_x \Omega_y \\
    \Omega_x \Omega_z
    \end{bmatrix} \right), \\
    a_2 - a_0 &= l \left( \begin{bmatrix}
    -\theta_z \\
    0 \\
    \theta_x
    \end{bmatrix} + \begin{bmatrix}
    \Omega_x \Omega_y \\
    - {\Omega_x}^2 - {\Omega_z}^2 \\
    \Omega_y \Omega_z
    \end{bmatrix} \right), \\
    a_3 - a_0 &= l \left( \begin{bmatrix}
    \theta_y \\
    - \theta_x \\
    0
    \end{bmatrix} + \begin{bmatrix}
    \Omega_x \Omega_z \\
    \Omega_y \Omega_z \\
    - {\Omega_x}^2 - {\Omega_y}^2
    \end{bmatrix} \right).
\end{align*}
Thus,
\begin{align}
    \theta = \begin{bmatrix}
    \theta_x \\
    \theta_y \\
    \theta_z
    \end{bmatrix} = \frac{1}{2 l}\begin{bmatrix}
    (a_{2,z} - a_{0,z}) - (a_{3,y} - a_{0,y}) \\
    (a_{3,x} - a_{0,x}) - (a_{1,z} - a_{0,z}) \\
    (a_{1,y} - a_{0,y}) - (a_{2,x} - a_{0,x})
    \end{bmatrix}.
\end{align}

\begin{figure}[ht]
    \centering
    \includegraphics[width=0.5\columnwidth]{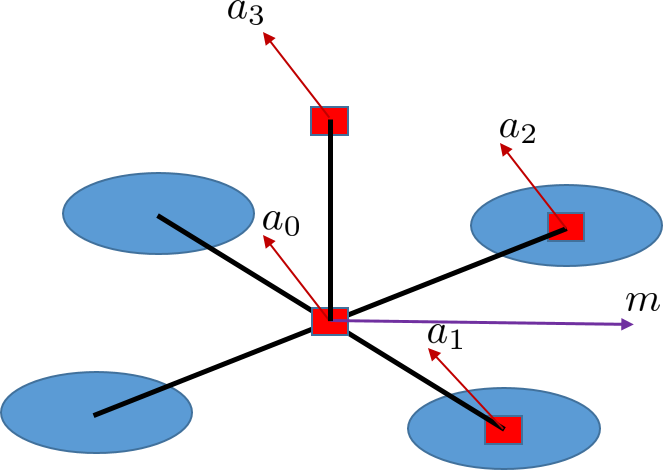}
    \caption{Proposed sensing system. Red boxes are accelerometers taking acceleration measurements $a_i$ at known locations. The inertial sensor in the middle also has a magnetometer taking magnetic field measurement $m$. }
    \label{fig:Tso(3)}
\end{figure}

\subsection{Measurement of Orientation}
Assuming that the magnetic field direction is known and remains constant, the orientation of an object with respect to an inertial frame can be measured by a magnetometer rigidly mounted on the object. For simplicity, the magnetometer is assumed to be aligned (in orientation) with the body-fixed frame $\{ B \}$.

The physical direction of the magnetic field of the earth can be modelled as a direction on the two sphere $S^2$ embedded in $\mathbb{R}^3$. We denote the magnetic field direction relative to the inertial frame by $\mathring{m}$, and measured magnetic field direction in body-fixed frame by $m$, such that $\mathring{m}, m \in S^2 \subset \mathbb{R}^3$. Then
\begin{equation}
    m = R^\top \mathring{m},
\end{equation}
where $R$ is the attitude (orientation) of the object with respect to the inertial frame.
In the presence of multiple magnetometers, assuming that the orientation of the sensors are aligned and are connected rigidly, the measurements can be combined by taking the average.

If we consider an application in Earth's gravity field for an object that is not too dynamic (i.e. most of the time, the object is not accelerating with respect to the Earth fixed frame), then the acceleration of $\{ B \}$ measured at accelerometer $0$ is approximately the gravitational acceleration direction measured in the body-fixed frame.
Thus, the outputs are
\begin{subequations}
\begin{align}
    y_1 &= m = R^\top \mathring{m}, \\
    y_2 &= a_0 \cong g R^\top e_3,
\end{align}
\end{subequations}
where $y_1, y_2 \in \mathbb{R}^3$. 

Note that other application specific sensor that provides vector measurement may also be used instead of the measurement from accelerometer $0$. For example, in satellite application, sun sensor~\cite{Magnis14} can be used. 

\section{SYMMETRY OF ATTITUDE SYSTEM}
Most physical systems have physical models with symmetries that encode the equivariance of the laws of motion. This expresses the fact that the behaviour of the system at one point is the same as the behaviour at another point in the state space, when viewed through a symmetric transformation of space~\cite{Mahony13}. Then, a global observer design can be achieved by analysing the behaviour at one point in space for a system with symmetry. 

\subsection{The Symmetry group}
We introduce a new notation for the inverse mapping from a skew-symmetric matrix to the corresponding vector such that
\begin{equation}
    (\Omega^{\times})^\vee = \Omega
\end{equation}
and that $\so^\vee(3) \equiv \mathbb{R}^3$.

Consider the Lie-group $\SE(3)$.
This group is usually considered as a model for rigid-body transformations of the Euclidean space.
In the following development we show that this group can also be used to model a symmetry of the the second order velocity kinematics for attitude.
Write the elements of $\SE(3) = \SO(3) \ltimes \so^\vee(3)$ in the form
\begin{align*}
    \SE(3) = \{ (Q, q) \;\vline\; Q\in \SO(3), q \in \R^3 \}.
\end{align*}
The semi-direct group product is given by
\begin{align*}
    (A,a) \cdot (B,b) = (AB, a + A b ),
\end{align*}
with group identity $(I_3, 0)$ and group inverse
\begin{align*}
    (A,a)^{-1} = (A^\top, - A^\top a).
\end{align*}

\begin{lemma} \label{lem:phi}
The action $\phi: \SE(3) \times \calM \to \calM$ defined by
\begin{align}
\label{eq:phi_func}
    \phi((Q,q),(R,\Omega^{\times})) = (RQ, (Q^\top(\Omega + q))^{\times})
\end{align}
is a transitive right group action of $\SE(3)$ on $\calM$.
\end{lemma}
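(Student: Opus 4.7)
The plan is to verify the three axioms of a transitive right group action in order: the identity axiom, the right-compatibility axiom, and transitivity. None of these steps involves deep ideas; the content is a careful bookkeeping of the semi-direct product structure of $\SE(3)$ together with the $(\cdot)^\times$ / $(\cdot)^\vee$ correspondence.

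First I would check the identity axiom by substituting $(Q,q) = (I_3,0)$ into \eqref{eq:phi_func}, which immediately returns $(R \cdot I_3, (I_3^\top(\Omega + 0))^\times) = (R,\Omega^\times)$.

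For the right-compatibility axiom I need to show that $\phi((A,a)\cdot (B,b),(R,\Omega^\times)) = \phi((B,b),\phi((A,a),(R,\Omega^\times)))$. The left-hand side expands, using the semi-direct product $(A,a)\cdot(B,b) = (AB,a+Ab)$, to
\begin{align*}
\phi((AB,a+Ab),(R,\Omega^\times)) = \bigl(RAB,\; (B^\top A^\top \Omega + B^\top A^\top a + B^\top b)^\times\bigr).
\end{align*}
The right-hand side is computed in two steps: the inner action gives $(RA,(A^\top(\Omega+a))^\times)$, and then applying $(B,b)$ produces $(RAB,(B^\top(A^\top(\Omega+a)+b))^\times)$, which agrees with the above after distributing $B^\top$. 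The only mild subtlety is consistent use of the linear isomorphism $\Omega^\times \leftrightarrow \Omega$ when adding $q \in \R^3$ to the $\so(3)$-component, but this is trivial once one notes that $(\cdot)^\times$ and $(\cdot)^\vee$ are inverse linear maps.

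Transitivity is proved constructively: given two states $(R_1,\Omega_1^\times)$ and $(R_2,\Omega_2^\times)$ in $\calM$, I solve the two equations $R_1 Q = R_2$ and $Q^\top(\Omega_1 + q) = \Omega_2$ for $(Q,q) \in \SE(3)$. The first gives $Q = R_1^\top R_2 \in \SO(3)$, and substituting into the second gives $q = R_2^\top R_1 \Omega_2 - \Omega_1 \in \R^3$. This choice lies in $\SE(3)$ and maps $(R_1,\Omega_1^\times)$ to $(R_2,\Omega_2^\times)$, establishing transitivity. I do not expect a real obstacle here; the trickiest piece is simply keeping the transposes and the order of multiplication straight in the compatibility calculation, since $\SE(3)$ is non-abelian.
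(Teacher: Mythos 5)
Your overall route is exactly the paper's: verify the identity axiom, verify right compatibility by expanding both sides through the semi-direct product $(A,a)\cdot(B,b)=(AB,a+Ab)$, and prove transitivity by explicitly solving for a group element. The identity and compatibility computations are correct (your convention $\phi_{(B,b)}\circ\phi_{(A,a)}=\phi_{(A,a)\cdot(B,b)}$ is the same right-action property the paper checks, just with the letters swapped).

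However, your transitivity element has a transpose error. From $Q^\top(\Omega_1+q)=\Omega_2$ with $Q=R_1^\top R_2$ you must multiply by $Q$ on the left, giving $q = Q\,\Omega_2 - \Omega_1 = R_1^\top R_2\,\Omega_2 - \Omega_1$, which is the element the paper uses ($(R^\top R',\,R^\top R'\Omega'-\Omega)$ in its notation). You instead wrote $q = R_2^\top R_1\,\Omega_2 - \Omega_1 = Q^\top\Omega_2-\Omega_1$, and with that choice
\begin{equation*}
Q^\top(\Omega_1+q) = (R_2^\top R_1)^2\,\Omega_2 \neq \Omega_2
\end{equation*}
in general, so the stated group element does not map $(R_1,\Omega_1^\times)$ to $(R_2,\Omega_2^\times)$. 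This is a computational slip rather than a conceptual gap --- your method (solve $R_1Q=R_2$, then substitute into the velocity equation) is sound and, carried out correctly, reproduces the paper's proof --- but as written the final claim of the transitivity step is false and needs the corrected $q$.
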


\begin{proof}
Let $(A,a),(B,b) \in \SE(3)$.
Then
\begin{align*}
    \phi((A,a), &\phi((B,b),(R, \Omega^{\times}))) \\
    &= \phi((A,a), (RB, (B^\top(\Omega + b))^{\times}) \\
    &= (RBA, (A^\top(B^\top(\Omega + b) + a))^{\times}) \\
    &= (R(BA), (A^\top B^\top(\Omega + b + B a))^{\times}) \\
    &= (R(BA), ((BA)^\top(\Omega + (b + B a)))^{\times}) \\
    &= \phi((BA, b+B a), (R, \Omega^{\times})) \\
    &= \phi((B,b) \cdot (A,a), (R, \Omega^{\times})).
\end{align*}
This demonstrates that the (right handed) group action property holds.
It is straightforward to verify that $\phi((\Id_3,0), (R,\Omega)) = (R,\Omega)$.

To see that $\phi$ is transitive, let $(R,\Omega^{\times})$ and $(R', {\Omega'}^{\times})$ be any elements of $\calM$. Then we can find the group element $(R^\top R', R^\top R'(\Omega') - \Omega)$ such that
\begin{align*}
    \phi((R^\top R', &R^\top R'(\Omega') - \Omega), (R, \Omega^{\times})) \\
    &= (RR^\top R', ((R^\top R')^\top (R^\top R'(\Omega') - \Omega + \Omega))^{\times}) \\
    &= (R', ((R^\top R')^\top (R^\top R') \Omega')^{\times}) \\
    &= (R', \Omega'^{\times}).
\end{align*}
\end{proof}

The key observation from this derivation is that the same Lie-group
$\SE(3)$ can be used as a symmetry group either for rigid-body kinematics, or as a symmetry group for second order attitude kinematics.
Note however that the state-space of the observer problem for attitude kinematics $T \SO(3)$ is not $\SE(3)$.

\subsection{Equivariance of the kinematics}
\begin{lemma} \label{lem:psi}
The action $\psi : \SE(3) \times \vecV \to \vecV$ defined by
\begin{align} \label{eq:psi}
    \psi((Q,q),(\pi^{\times}, \theta^{\times})) = ((Q^\top(\pi - q))^{\times}, (Q^\top \theta)^{\times})
\end{align}
is a right action of $\SE(3)$ on $\vecV$.
\end{lemma}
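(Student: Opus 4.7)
My plan is to verify the two defining properties of a right group action directly: (i) the identity element acts trivially, and (ii) compatibility with the group product, that is, $\psi((A,a), \psi((B,b), v)) = \psi((B,b) \cdot (A,a), v)$ for all $(A,a), (B,b) \in \SE(3)$ and $v = (\pi^\times, \theta^\times) \in \vecV$. The identity check is immediate: substituting $(Q,q) = (I_3, 0)$ into \eqref{eq:psi} gives $((I_3^\top(\pi - 0))^\times, (I_3^\top \theta)^\times) = (\pi^\times, \theta^\times)$.

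For compatibility, I will treat the two components of $\vecV$ separately. The second component is the easy one, since $\psi$ acts on $\theta^\times$ simply by left multiplication of $\theta$ by $Q^\top$, independent of the translation part $q$; applying $(B,b)$ then $(A,a)$ sends $\theta \mapsto B^\top \theta \mapsto A^\top B^\top \theta = (BA)^\top \theta$, which agrees with the action of $(BA, b + Ba) = (B,b) \cdot (A,a)$ on the second slot.

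The main bookkeeping step is the first component. Writing $\pi' := B^\top(\pi - b)$ for the intermediate $\pi$-value after applying $(B,b)$, I then compute
\begin{align*}
    (A^\top (\pi' - a))^\times
    &= (A^\top B^\top (\pi - b) - A^\top a)^\times \\
    &= ((BA)^\top (\pi - b) - A^\top a)^\times.
\end{align*}
On the other side, applying the composed group element $(BA, b + Ba)$ directly yields
\begin{align*}
    ((BA)^\top(\pi - (b + Ba)))^\times
    &= ((BA)^\top(\pi - b) - (BA)^\top B a)^\times \\
    &= ((BA)^\top(\pi - b) - A^\top a)^\times,
\end{align*}
and the two expressions coincide. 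The only subtlety, and the step where the semi-direct product structure really matters, is recognising that $(BA)^\top B = A^\top$, which is precisely what forces the $q$-translation to compose correctly. Once that identification is made, the right-action axiom follows.
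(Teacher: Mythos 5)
Your proof is correct and follows essentially the same route as the paper: a direct verification of the identity axiom and the right-action compatibility $\psi((A,a),\psi((B,b),v)) = \psi((B,b)\cdot(A,a),v)$, with the key algebraic step being $(BA)^\top B a = A^\top a$ (the paper writes the same cancellation as $B^\top(\pi-b)-a = B^\top(\pi-b-Ba)$). The only cosmetic difference is that you check the two components separately and reduce both sides to a common expression, whereas the paper transforms one side into the other in a single chain of equalities.
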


\begin{proof}
Let $(A,a),(B,b) \in \SE(3)$ and let $(\pi^{\times},\theta^{\times}) \in \vecV$. Then we have
\begin{align*}
    \psi((A,a),&\psi((B,b),(\pi^{\times},\theta^{\times}))) \\
    &= \psi((A,a), ((B^\top(\pi - b))^{\times}, (B^\top \theta)^{\times})) \\
    &= ((A^\top (B^\top(\pi - b) - a))^{\times}, (A^\top B^\top \theta)^{\times}) \\
    &= ((A^\top ( B^\top(\pi - b - B a)))^{\times}, ((BA)^\top \theta)^{\times}) \\
    &= ((A^\top B^\top(\pi - b - B a))^{\times}, ((BA)^\top \theta)^{\times}) \\
    &= \psi((BA, b + B a), (\pi^\times, \theta^\times)) \\
    &= \psi((B,b)\cdot (A,a), (\pi^\times, \theta^\times)).
\end{align*}
This shows that the (right handed) group action property holds. It is straightforward to verify that $\psi((\Id_3,0),(\pi^{\times}, \theta^{\times})) = (\pi^{\times}, \theta^{\times})$. Thus, $\psi$ is indeed a right action.
\end{proof}

Next, we wish to show the system dynamics are equivariant. 

\begin{lemma} \label{lem:equivariant_input}
The system defined in \eqref{eq:system} is equivariant with respect to the group $\SE(3)$ and group actions $\phi$ and $\psi$ as given in Lemmas \ref{lem:phi} and \ref{lem:psi} respectively; i.e., one has
\begin{align*}
    \td \phi_X [f(x, v)] = f(\phi_X(x), \psi_X(v)),
\end{align*}
for any $X \in \SE(3)$, $x \in \calM$, and $v \in \vecV$
\end{lemma}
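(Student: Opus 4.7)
The plan is to expand both sides of the equivariance identity explicitly, using the definitions of $\phi$, $\psi$, and $f$, and then rely on the well-known $\SO(3)$ identity $(Q^\top u)^{\times} = Q^\top u^{\times} Q$ to match the two expressions. Throughout, I fix $X = (Q,q) \in \SE(3)$, $x = (R,\Omega^{\times}) \in \calM$, and $v = (\pi^{\times}, \theta^{\times}) \in \vecV$.

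First I would compute the right-hand side. From Lemma \ref{lem:phi}, $\phi_X(x) = (RQ, (Q^\top(\Omega+q))^{\times})$, and from Lemma \ref{lem:psi}, $\psi_X(v) = ((Q^\top(\pi-q))^{\times}, (Q^\top \theta)^{\times})$. Feeding these into $f$ gives
\begin{align*}
    f(\phi_X(x), \psi_X(v)) = \bigl(RQ\bigl[(Q^\top(\Omega+q))^{\times} + (Q^\top(\pi-q))^{\times}\bigr],\; (Q^\top\theta)^{\times}\bigr).
\end{align*}
The crucial cancellation of $q$ in the sum inside the first component yields $RQ\,(Q^\top(\Omega+\pi))^{\times}$; then applying the $\SO(3)$ identity $(Q^\top u)^{\times} = Q^\top u^{\times} Q$ simplifies this to $R(\Omega+\pi)^{\times}Q$.

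Next I would compute the left-hand side. Since $X$ is fixed, $\phi_X: \calM \to \calM$ is a smooth map whose differential at $(R,\Omega^{\times})$ acts componentwise: the first component $R \mapsto RQ$ is linear and so its differential sends $\dot R \mapsto \dot R Q$; the second component $\Omega^{\times} \mapsto (Q^\top(\Omega+q))^{\times}$ is affine in $\Omega$ (with $q$ constant), and its differential sends $\dot\Omega^{\times} \mapsto (Q^\top \dot\Omega)^{\times}$. Substituting $f(x,v) = (R(\Omega+\pi)^{\times}, \theta^{\times})$ therefore gives
\begin{align*}
    \td\phi_X[f(x,v)] = \bigl(R(\Omega+\pi)^{\times}Q,\; (Q^\top\theta)^{\times}\bigr),
\end{align*}
which agrees precisely with the simplified right-hand side.

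The only real subtlety is handling the differential $\td\phi_X$ correctly and verifying that, after the $q$-terms cancel in the sum of the two $\psi$-components, the remaining expression matches the differential via the conjugation identity for the hat map. Neither step is deep, but writing them cleanly requires being explicit about the identification $T_{\Omega^{\times}}\so(3) \equiv \so(3)$ and about the fact that $q$ enters $\phi$ and $\psi$ with opposite signs so as to cancel in $f(\phi_X(x),\psi_X(v))$. Once those observations are in place, the two sides coincide by a direct componentwise comparison.
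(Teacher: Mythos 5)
Your proposal is correct and takes essentially the same approach as the paper: a direct componentwise computation of both sides, using the linearity of the hat map so that the $q$-terms cancel, together with the conjugation identity $(Q^\top u)^{\times} = Q^\top u^{\times} Q$. The paper merely organizes the same calculation as a single chain starting from $\td\phi_X[f(x,v)]$ and splitting $Q^\top(\pi+\Omega)$ into $Q^\top(\Omega+q)+Q^\top(\pi-q)$, whereas you simplify the right-hand side and match it to the differential computed separately; the content is identical.
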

\begin{proof}
Let $(Q,q) \in \SE(3)$, $(R,\Omega^{\times}) \in \calM$, and $(\pi^{\times}, \theta^{\times}) \in \vecV$ be arbitrary.
Then we have 
\begin{align*}
    \td \phi_{(Q,q)}  f&((R,\Omega^{\times}), (\pi^{\times},\theta^{\times})) \\
    &= \tD_{(R,\Omega)} \phi_{(Q,q)} (R,\Omega^{\times}) [(R(\pi + \Omega)^{\times}, \theta^{\times})] \\
    &= (R(\pi + \Omega)^{\times} Q , (Q^\top \theta)^{\times} ) \\
    &= (RQ Q^\top(\pi + \Omega)^{\times} Q, (Q^\top \theta)^{\times}) \\
    &= (RQ (Q^\top(\pi + \Omega))^{\times}, (Q^\top \theta)^{\times}) \\
    &= (RQ ((Q^\top(\Omega + q))^{\times} + (Q^\top(\pi - q))^{\times}), (Q^\top \theta)^{\times}) \\
    &= f( (RQ, (Q^\top(\Omega + q))^{\times}), ((Q^\top(\pi - q))^{\times}, (Q^\top \theta)^{\times})) \\
    &= f( \phi_{(Q,q)} (R,\Omega^{\times}), \psi_{(Q,q)} (\pi^{\times}, \theta^{\times}) ),
\end{align*}
which proves the equivariance condition.
\end{proof}

Finally, we demonstrate that the group action on the output is also equivariant.
\begin{lemma} \label{lem:rho}
The action $\rho^i: \SE(3) \times \N_i \to \N_i$ defined by
\begin{equation*}
    \rho^i((Q,q), y_i) = Q^\top y_i
\end{equation*}
is a transitive right group action of $\SE(3)$ on $\N_i$, where $\N_i \equiv \mathbb{R}^3$.
\end{lemma}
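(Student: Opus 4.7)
The proof has two pieces, verifying that $\rho^i$ is a right action of $\SE(3)$ and establishing transitivity, and both follow the pattern already used for $\phi$ and $\psi$ in Lemmas~\ref{lem:phi} and~\ref{lem:psi}. The action axioms reduce to a short matrix computation; the transitivity claim reduces to the classical fact that $\SO(3)$ acts transitively on the orbits of its standard action on $\R^3$.

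For the right action axioms, the identity check is immediate since the group identity of $\SE(3)$ is $(I_3, 0)$ and $\rho^i((I_3, 0), y_i) = I_3^\top y_i = y_i$. For compatibility with the semi-direct product, given $(A, a), (B, b) \in \SE(3)$ I will compute
\begin{align*}
\rho^i((A, a), \rho^i((B, b), y_i))
&= A^\top (B^\top y_i) = (BA)^\top y_i \\
&= \rho^i((BA, b + B a), y_i) = \rho^i((B, b) \cdot (A, a), y_i),
\end{align*}
using only associativity of matrix multiplication and the semi-direct product rule $(B, b) \cdot (A, a) = (BA, b + B a)$ introduced earlier. This is exactly the (right-handed) action condition.

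For transitivity, given $y, y' \in \N_i$ I will exhibit a group element $(Q, q) \in \SE(3)$ with $\rho^i((Q, q), y) = Q^\top y = y'$. Since the translation component $q$ does not appear in the action formula, it plays no role and may be set to $q = 0$; the task reduces to constructing $Q \in \SO(3)$ with $Q^\top y = y'$. The main step is this explicit construction: when $y, y'$ are not collinear I take $Q$ to be the rotation acting in the plane $\spn\{y, y'\}$ through the angle between $y$ and $y'$, with axis the unit normal to that plane; when $y, y'$ are collinear I take $Q$ to be either the identity or a half-turn about any axis orthogonal to $y$. This is the standard transitivity argument for the $\SO(3)$ action on $\R^3$, and it is the only step that requires any genuine content — the axiom verifications above are purely algebraic and present no obstacle.
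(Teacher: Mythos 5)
Your verification of the right-action axioms is exactly the paper's own computation and is correct: the identity check and the chain $A^\top(B^\top y_i) = (BA)^\top y_i = \rho^i((B,b)\cdot(A,a), y_i)$ are all that is needed there, and the translation component correctly plays no role. The genuine gap is in the transitivity argument you add. The map $y \mapsto Q^\top y$ with $Q \in \SO(3)$ preserves the Euclidean norm, so the orbit of $y$ is the sphere of radius $|y|$; no group element carries $y$ to $y'$ when $|y| \neq |y'|$, and since $q$ never enters the formula the translation part cannot repair this. Your explicit construction (``the rotation in the plane $\spn\{y, y'\}$ through the angle between them'') silently assumes the two vectors have equal length --- for vectors of different norms there is no such rotation, and the argument fails. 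As stated, with $\N_i \equiv \R^3$, the claimed transitivity is false.

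It is worth noting that the paper's own proof never addresses transitivity at all: it verifies the two action axioms and stops. So you have gone further than the authors and, in doing so, exposed a defect in the lemma statement rather than a flaw in your handling of the axioms. The claim becomes true, and your construction becomes a complete proof, if the output space is restricted to a single sphere such as $S^2$ --- which is where the actual measurements $m = R^\top \mathring{m}$ and $a \cong R^\top e_3$ live. You should either add the hypothesis $|y| = |y'|$ (equivalently, replace $\N_i \equiv \R^3$ by $\N_i \equiv S^2$) or drop the word ``transitive'' from the conclusion; with that amendment your collinear/non-collinear case split is fine.
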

\begin{proof}
Let $(A, a), (B, b) \in \SE(3)$ and let $y_i \in \N$. Then we have
\begin{align}
    \rho^i((A, a), \rho^i((B, b), y_i)) &= \rho^i((A, a), B^\top y_i)), \nonumber \\
    &= A^\top B^\top y_i, \nonumber \\
    &= (BA)^\top y_i, \nonumber \\
    &= \rho^i((B, b) \cdot (A, a), y_i). \nonumber
\end{align}
This shows that the (right handed) group action property holds. It is straightforward to verify that $\rho^i((I_3, 0), y_i) = y_i$. Thus, $\rho$ is indeed a right action.
\end{proof}

\subsection{System lift onto the group}
In order to work with our observer on $\SE(3)$, we need to lift the kinematics of the system. A lifted equivariant system is defined as the system on the $\SE(3)$-torsor
\begin{subequations}
\begin{align}
    \dot{X} &:= F_{x_0}(X, v), \\
    y_i &:= \rho^i(X, \mathring{y}_i)  =: H_i(x),
\end{align}
\end{subequations}
for $v \in \vecV$ and initial condition $X(0) \in \SE(3)$ such that $\phi_{x_0}(X(0)) = x(0)$ projects to the initial condition of (\ref{eq:kinematics}).

Specifically, we require a system lift, which is a family of functions $F_{x_0}: \SE(3) \times \vecV \to T\SE(3)$ such that $\td\phi_{x_0} [F_{x_0}(X, v)] = f(\phi_{x_0}(X), v)$ for all $x_0 \in \calM$, $v \in \vecV$, and $X \in \SE(3)$. Lemma \ref{lem:system_lift} provides such a family.

\begin{lemma} \label{lem:system_lift}
The family of functions $F_{x_0}: \SE(3) \times \vecV \to T\SE(3)$ defined by
\begin{align}
    F&_{(R_0, (\Omega_0)^{\times})}((Q,q), (\pi^{\times}, \theta^{\times})) \nonumber \\
    &= (Q(Q^\top (\Omega_0 + q) + \pi)^{\times}, \ Q (\pi^{\times} Q^\top(\Omega_0 + q) + \theta)) \nonumber \\
    &= ( (\Omega_{0} + q)^{\times} Q + (Q \pi)^{\times} Q, (Q \pi)^{\times} (\Omega_{0} + q) + Q \theta )
    \label{eq:system_lift}
\end{align}
defines a system lift for \eqref{eq:system} onto the group $\SE(3)$.
\end{lemma}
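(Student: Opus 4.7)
The proof is a direct verification of the lift condition
\[
\td\phi_{x_0}\bigl[F_{x_0}(X,v)\bigr] = f\bigl(\phi_{x_0}(X), v\bigr)
\]
for arbitrary $x_0 = (R_0, \Omega_0^\times) \in \calM$, $X = (Q,q) \in \SE(3)$, and $v = (\pi^\times, \theta^\times) \in \vecV$. The plan has three steps: expand the right-hand side using Lemma \ref{lem:phi} and the kinematics \eqref{eq:system}; compute the differential of $\phi_{x_0}$ coordinatewise; and then reduce the resulting matrix identities using the conjugation formula $Q w^\times Q^\top = (Qw)^\times$, valid for all $Q \in \SO(3)$ and $w \in \R^3$, together with skew-symmetry $(w^\times)^\top = -w^\times$.

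For the right-hand side, I would apply Lemma \ref{lem:phi} to obtain $\phi_{x_0}(X) = (R_0 Q, (Q^\top(\Omega_0+q))^\times)$, and substitute into \eqref{eq:system} to get
\[
f(\phi_{x_0}(X), v) = \bigl(R_0 Q\bigl((Q^\top(\Omega_0+q))^\times + \pi^\times\bigr),\; \theta^\times\bigr).
\]
For the left-hand side, since $\phi_{x_0}:(Q,q) \mapsto (R_0 Q,\; (Q^\top(\Omega_0+q))^\times)$ decouples into two components, its differential sends a tangent vector $(\dot Q, \dot q)$ to $\bigl(R_0 \dot Q,\; (\dot Q^\top(\Omega_0+q) + Q^\top \dot q)^\times\bigr)$. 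I would then substitute the second form of $F_{x_0}$ in \eqref{eq:system_lift}, so that $\dot Q = (\Omega_0+q)^\times Q + (Q\pi)^\times Q$ and $\dot q = (Q\pi)^\times(\Omega_0+q) + Q\theta$.

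The first-component comparison reduces, via $Q(Q^\top(\Omega_0+q))^\times = (\Omega_0+q)^\times Q$ and $Q\pi^\times = (Q\pi)^\times Q$ (both instances of the conjugation identity), to $R_0 Q((Q^\top(\Omega_0+q))^\times + \pi^\times) = R_0 \dot Q$, as required. For the second component, skew-symmetry together with the nilpotency $(\Omega_0+q)^\times(\Omega_0+q) = 0$ gives
\[
\dot Q^\top(\Omega_0+q) = -Q^\top(Q\pi)^\times(\Omega_0+q),
\]
which cancels exactly against the matching term in $Q^\top \dot q = Q^\top(Q\pi)^\times(\Omega_0+q) + \theta$, leaving $\theta$ and hence $\theta^\times$ after applying $(\cdot)^\times$.

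The main obstacle is purely notational bookkeeping: one must keep straight which expression is being conjugated and on which side the factor $Q$ sits, and not confuse the two arguments $(\Omega_0+q)$ and $\pi$. Every manipulation, however, is a single application of the conjugation identity or of skew-symmetry. As a preliminary sanity check, I would also verify that the two expressions given for $F_{x_0}$ in \eqref{eq:system_lift} coincide, which is a one-line application of the same identity and confirms the formula is well-posed before the main computation.
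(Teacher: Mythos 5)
Your proposal is correct and follows essentially the same route as the paper: a direct verification of the defining condition $\td\phi_{x_0}[F_{x_0}(X,v)] = f(\phi_{x_0}(X),v)$ using Lemma \ref{lem:phi}, the conjugation identity $Qw^\times Q^\top = (Qw)^\times$, skew-symmetry, and the cancellation driven by $(\Omega_0+q)^\times(\Omega_0+q)=0$; the paper merely computes the differential along a curve $t \mapsto (Q\exp(t\,\cdot), q + t\,\cdot)$ rather than coordinatewise as you do, which is the same computation. (One cosmetic note: $w^\times w = 0$ is the vanishing of the cross product of a vector with itself, not nilpotency of $w^\times$.)
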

\begin{proof}
In order to show that $\{ F_{x_0} \}_{x_0 \in \calM}$ is indeed a system lift, we need to show that $\td\phi_{x} F_{x_0}(X, v) = f(\phi_{x_0}(X), v)$ for all $x_0 \in \calM$, $v \in \vecV$, and $X \in \SE(3)$. Let $(R_0, (\Omega_0)^{\times}) \in \calM$, $(\pi^{\times}, \theta^{\times}) \in \vecV$, and $(Q,q) \in \SE(3)$ be arbitrary.

Then we have
\begin{align*}
    \td\phi&_{(R_0, (\Omega_0)^{\times})} F_{(R_0, (\Omega_0)^{\times})}((Q,q), (\pi^{\times}, \theta^{\times})) \\
    &= \td\phi_{(R_0, (\Omega_0)^{\times})} [Q(Q^\top (\Omega_0 + q) + \pi)^{\times}, \\
    &\hspace{0.5cm} Q (\pi^{\times} Q^\top(\Omega_0 + q) + \theta)] \\
    &= \frac{\td}{\td t} \phi_{(R_0, (\Omega_0)^{\times})} (Q \exp(t(Q^\top (\Omega_0 + q) + \pi)^{\times} ), \\
    &\hspace{0.5cm} Q (t (\pi^{\times} Q^\top(\Omega_0 + q) + \theta)) + q)|_{t=0} \\
    &= \frac{\td}{\td t} (R_0 Q \exp(t(Q^\top (\Omega_0 + q) + \pi)^{\times} ), \\
    &\hspace{0.5cm} [(Q \exp(t(Q^\top (\Omega_0 + q) + \pi)^{\times} ))^\top (\Omega_0 +  \\
    & \hspace{0.5cm}(\Omega_0 + Q (t (\pi^{\times} Q^\top(\Omega_0 + q) + \theta)) + q ) ]^{\times} )|_{t=0} \\
    &= (R_0 Q (Q^\top (\Omega_0 + q) + \pi)^{\times} , \\
    &\hspace{0.5cm} [ -(Q^\top (\Omega_0 + q) + \pi)^{\times} Q^\top (\Omega_0 + q) +\\
    & \hspace{0.5cm}  Q^\top Q (\pi^{\times} Q^\top(\Omega_0 + q) + \theta))  ]^{\times} ) \\
    &= (R_0Q (Q^\top (\Omega_0 + q) + \pi)^{\times} , \\
    & \hspace{0.5cm}[ -(Q^\top (\Omega_0 + q))^{\times} Q^\top (\Omega_0 + q) - \pi^{\times} Q^\top (\Omega_0 + q) + \\
    & \hspace{0.5cm} \pi^{\times} Q^\top(\Omega_0 + q) + \theta  ]^{\times} ) \\
    &= (R_0 Q (Q^\top (\Omega_0 + q) + \pi)^{\times} , \ \theta^{\times}) \\
    &= f( (R_0 Q, (Q^\top(\Omega_0+q))^{\times}), (\pi^{\times}, \theta^{\times})) \\
    &= f( \phi_{(R_0, (\Omega_0)^{\times})}((Q,q)), (\pi^{\times}, \theta^{\times}) ),
\end{align*}
which shows that our condition holds, and therefore $\{ F_{x_0} \}_{x_0 \in \calM}$ is a system lift as required.
\end{proof}

Let an element of the lifted state space be represented as $X = (Q, q) \in \SE(3)$, with unknown initial condition $X(0) = (Q_0, q_0)$.
Lemma~\ref{lem:system_lift} shows that the lifted kinematics on $\SE(3)$ can be written as
\begin{subequations}
\label{eq:lifted_system_kinematics}
\begin{align}
    \dot{Q} &= (\Omega_{0} + q)^{\times} Q + (Q \pi)^{\times} Q, \\
    \dot{q} &= (Q \pi)^{\times} (\Omega_{0} + q) + Q \theta,
\end{align}
\end{subequations}
where $\pi \equiv \mathbf{0}$ in the real system, since there is no input angular velocity (unlike first order state kinematics).

\section{OBSERVER DESIGN}
From equation (\ref{eq:full_kinematics}), assuming no external angular velocity input ($\pi \equiv \mathbf{0}$), the behaviour of the second order attitude kinematics satisfies 
\begin{subequations}
\begin{align}
    \dot{R} &= R \Omega^{\times}, \\
    \dot{\Omega} &= \theta. 
\end{align}
\end{subequations}
The (output) measurements available are 
\begin{align*}
    a &= \frac{a_0}{|a_0|} = R^\top \mathring{a} \cong R^\top e_3, \\
    \hat{a} &= \hat{R}^\top \mathring{a} \cong \hat{R}^\top e_3, \\
    m &= R^\top \mathring{m}, \\
    \hat{m} &= \hat{R}^\top \mathring{m},
\end{align*}
where a, $\hat{a}, m, \hat{m} \in S^2$, and $|a| = |\hat{a}| = |m| = |\hat{m}| = 1$.

\subsection{Proposed Observer}
Our proposed observer works on the symmetry group of the attitude state space i.e.\ $\SE(3)$, and not the manifold $\calM$.
We use $\hat{X}(t; \hat{X}(0)) \in \SE(3)$ to denote the estimate for the lifted system state $X(t; X(0))$ for unknown $X(0)$. The fundamental structure for the observer that we consider is that of a \emph{pre-observer} (a copy of (\ref{eq:lifted_system_kinematics})) with innovation.
The innovation takes outputs $\{ y_i \}$ and the observer state $\hat{X}$, and generates a correction term for the observer dynamics with the goal that $\hat{x} = \phi(\hat{X}, x_0)$ converges to $x(t, x_0)$.

\begin{theorem}
\label{th:observer_lifted}
Let $(R_0, \Omega_0)$ be a constant reference state in $\calM$. 
Let $\hat{X} = (\hat{Q}, \hat{q}) \in \SE(3)$, with arbitrary initial condition $\hat{X}(0) = (\hat{Q}_0, \hat{q}_0)$.
Consider the  observer kinematics
\begin{subequations}
\label{eq:lifted_observer_kinematics}
\begin{align}
    \dot{\hat{Q}} &= (\Omega_{0} + \hat{q})^{\times} \hat{Q} + (\hat{Q} \hat{\pi})^{\times} \hat{Q}, \\
    \dot{\hat{q}} &= (\hat{Q} \hat{\pi})^{\times} (\Omega_{0} + \hat{q}) + \hat{Q} \hat{\theta},
\end{align}
\end{subequations}
where
\begin{align*}
    \hat{\pi} &= k_1 ( m \times \hat{Q}^\top{R_0}^\top \mathring{m} + a \times \hat{Q}^\top{R_0}^\top \mathring{a}), \\
    \hat{\theta} &= \theta + k_2 ( m \times \hat{Q}^\top{R_0}^\top \mathring{m} + a \times \hat{Q}^\top{R_0}^\top \mathring{a}).
\end{align*}
Denote $(\hat{R}, \hat{\Omega}) = \phi((\hat{Q},\hat{q}), (R_0, \Omega_0))$, then $(\hat{R}, \hat{\Omega})$ converges uniformly locally exponentially, and almost globally asymptotically to $(R, \Omega)$. 
\end{theorem}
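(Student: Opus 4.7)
The plan is to project the observer dynamics from $\SE(3)$ down to the observable error on $\calM$ and analyse that error directly. I would define $E := R\hat{R}^\top \in \SO(3)$ and $\tilde{\Omega} := \Omega - \hat{\Omega} \in \R^3$, where $(\hat{R},\hat{\Omega}) = \phi((\hat{Q},\hat{q}),(R_0,\Omega_0))$. Because $\phi$ is equivariant (Lemma~\ref{lem:equivariant_input}) and the lift of Lemma~\ref{lem:system_lift} intertwines the observer \eqref{eq:lifted_observer_kinematics} with the natural kinematics \eqref{eq:full_kinematics}, the projected estimate satisfies $\dot{\hat{R}} = \hat{R}(\hat{\Omega}+\hat{\pi})^\times$ and $\dot{\hat{\Omega}} = \hat{\theta}$. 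The identity $\hat{Q}^\top R_0^\top \mathring{y}_i = \hat{R}^\top \mathring{y}_i = \hat{y}_i$ re-expresses the innovation as $\hat{\pi} = k_1 \omega$ with $\omega := m \times \hat{m} + a \times \hat{a}$, and the choice $\hat{\theta} = \theta + k_2\omega$ yields $\dot{\tilde{\Omega}} = -k_2\omega$ exactly.

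The key structural observation is that $R\omega = \mathring{a}\times E\mathring{a} + \mathring{m}\times E\mathring{m} =: \varepsilon(E)$ depends only on $E$, because $R\hat{m} = R\hat{R}^\top\mathring{m} = E\mathring{m}$ and $R$ preserves cross products. With the Lyapunov candidate
\begin{equation*}
\Lyap(E,\tilde{\Omega}) = \bigl(1 - \mathring{a}^\top E\mathring{a}\bigr) + \bigl(1 - \mathring{m}^\top E\mathring{m}\bigr) + \frac{1}{2k_2}|\tilde{\Omega}|^2,
\end{equation*}
direct calculation using $\dot{E} = (R(\tilde{\Omega}-\hat{\pi}))^\times E$ and the triple product identity yields the rotation part $\dot{V}_R = (\tilde{\Omega}-k_1\omega)\cdot\omega$, with the orthogonality of $R$ cancelling its appearance. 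Combined with $\dot{V}_\Omega/k_2 = -\tilde{\Omega}\cdot\omega$, the $\tilde{\Omega}\cdot\omega$ cross terms cancel and
\begin{equation*}
\dot{\Lyap} = -k_1|\omega|^2 \leq 0,
\end{equation*}
uniformly in the unknown driving trajectory $(R,\Omega,\theta)$.

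Almost global asymptotic convergence then follows by Barbalat's lemma (bounding $\ddot{\Lyap}$ under mild assumptions on $\theta$): $\omega \to 0$ forces $E$ into the critical set of $V_R$, which under linear independence of $\mathring{a},\mathring{m}$ consists of $E=\Id$ together with a single $\pi$-rotation about $\mathring{a}\times\mathring{m}$; invariance with $\omega\equiv 0$ then gives $\dot{E} = (R\tilde{\Omega})^\times E \equiv 0$, forcing $\tilde{\Omega}=0$. A Hessian computation shows the non-identity critical point is a strict saddle, so its stable manifold has measure zero in $\calM$ and generic trajectories converge to $(\Id,0)$. For local uniform exponential stability I would linearise via $E=\exp(\xi^\times)$, obtaining $\varepsilon(E)\approx M\xi$ with $M := (|\mathring{a}|^2\Id - \mathring{a}\mathring{a}^\top) + (|\mathring{m}|^2\Id - \mathring{m}\mathring{m}^\top) \succ 0$, and passing to the inertial-frame velocity error $\tilde{\Omega}_I := R\tilde{\Omega}$ to obtain
\begin{equation*}
\dot{\xi} = -k_1 M \xi + \tilde{\Omega}_I, \qquad \dot{\tilde{\Omega}}_I = -k_2 M \xi + (R\Omega)^\times \tilde{\Omega}_I,
\end{equation*}
where the skew-symmetric perturbation is energy-conserving. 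A cross-term Lyapunov function $\tfrac{k_2}{2}\xi^\top M\xi + \epsilon\,\xi^\top M\tilde{\Omega}_I + \tfrac{1}{2}|\tilde{\Omega}_I|^2$ with sufficiently small $\epsilon>0$ then delivers uniform exponential decay for all positive $k_1,k_2$.

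The main obstacle I anticipate is ensuring that the $R$-dependence cancels cleanly to give the closed-form $\dot{\Lyap} = -k_1|\omega|^2$ with no residual coupling to the unknown trajectory; once that identity is in hand, the remaining subtleties are the by-now-standard but careful arguments showing that the non-identity critical equilibrium of $V_R$ is a strict saddle with positive-codimension stable manifold, and that the skew-symmetric $(R\Omega)^\times$ perturbation in the linearisation is dominated by the cross-term Lyapunov construction uniformly in time.
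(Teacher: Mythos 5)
Your proposal follows essentially the same route as the paper: project the observer through $\phi$ to obtain $\dot{\hat{R}} = \hat{R}(\hat{\Omega}+\hat{\pi})^\times$, $\dot{\hat{\Omega}} = \hat{\theta}$, identify the innovation with $\omega = m\times\hat{m} + a\times\hat{a}$ (the paper's $\alpha$), use the identical Lyapunov function, derive $\dot{\Lyap} = -k_1|\omega|^2$, apply Barbalat's lemma, and then analyse the critical set and a linearisation in the variables $(\xi, R\tilde{\Omega})$ --- exactly the change of coordinates $(\tilde{R}, R\tilde{\Omega})$ the paper invokes when it cites Theorem 5.1 of Mahony \emph{et al.}\ (2008). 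The only methodological difference is that you carry out explicitly (Hessian/saddle argument, cross-term Lyapunov function) the steps the paper delegates to that cited theorem, which is fine and somewhat more self-contained.

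One correction is needed in your critical-set characterisation. The critical set of $V_R(E) = \trace{(\Id - E)M}$ with $M = \mathring{a}\mathring{a}^\top + \mathring{m}\mathring{m}^\top$ is not ``$E=\Id$ together with a single $\pi$-rotation about $\mathring{a}\times\mathring{m}$''. Generically (when $\mathring{a}^\top\mathring{m}\neq 0$) it consists of the identity and \emph{three} $\pi$-rotations, one about each eigenvector of $M$: about $\mathring{a}\times\mathring{m}$ (eigenvalue $0$) and about the two bisector directions in $\spn\{\mathring{a},\mathring{m}\}$ (eigenvalues $1\pm\mathring{a}^\top\mathring{m}$); in the degenerate case $\mathring{a}\perp\mathring{m}$ the repeated eigenvalue even yields a continuum of critical rotations. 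This is precisely the set $\{U_0 D_i U_0^\top\}$ the paper quotes from Mahony \emph{et al.} Your measure-zero stable-manifold argument still goes through, but it must be applied to this larger set (and the degenerate case addressed), not only to the single rotation you name. On the other hand, your linearisation matrix $(\Id-\mathring{a}\mathring{a}^\top)+(\Id-\mathring{m}\mathring{m}^\top)$ is correctly positive definite for linearly independent directions (unlike the paper's $M$, which is only rank $2$ despite being called positive definite), so the exponential-stability part of your argument is sound.
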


\begin{proof}
From Lemma 2 of \cite{Mahony13}, the solution $X(t; X_0)$ projects back to the state $x(t; x_0)$ via the group action
\begin{align*}
    \phi_{x_0}(X (t; X(0)) = x(t; x_0) 
\end{align*}
such that
\begin{align}
\label{eq:projection_to_state_space}
    (R, \Omega^{\times}) &= \phi((Q,q), (R_0, (\Omega_0)^{\times})) \nonumber \\
    &= (R_0 Q, (Q^\top (\Omega_0 + q))^{\times}),
\end{align}
and,
\begin{align}
    (\hat{R}, \hat{\Omega}^{\times}) &= \phi((\hat{Q},\hat{q}), (R_0, (\Omega_0)^{\times})) \nonumber \\
    &= (R_0 \hat{Q}, (\hat{Q}^\top (\Omega_0 + \hat{q}))^{\times}).
    \label{eq:hatX_to_hatx}
\end{align}

This can be intuitively understood as shown in Fig.~\ref{fig:original_lifted_state}.
\begin{figure}[ht]
\centering
\includegraphics[width=.6\columnwidth]{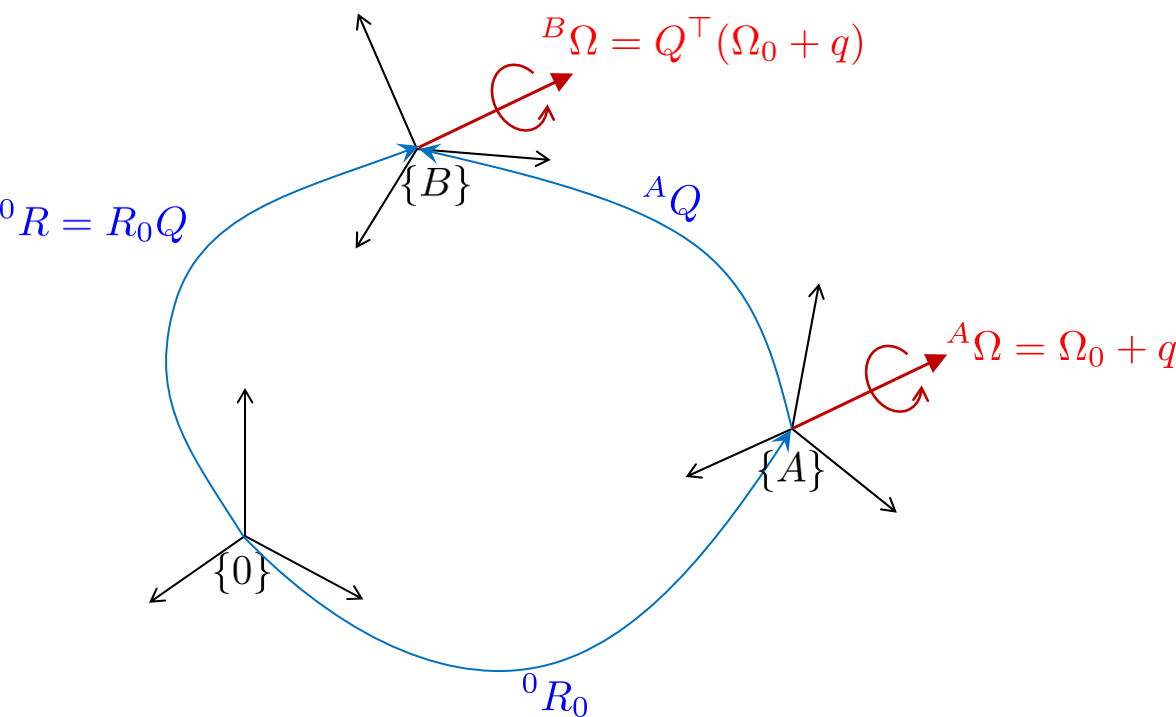}
\caption{\label{fig:original_lifted_state}Relationship between original state $(^{0}R, ^{B}\Omega)$, reference state $(^{0}R_0, ^{A}\Omega_0)$ and lifted state $(^{A}Q,^{A}q)$. Note that the superscript before the variables denotes the frame of reference, and is left out when the frame of reference is clear. For the estimated variables, the ground truth variables are simply replaced by the estimate (\emph{e.g.} $R$ is replaced by $\hat{R}$ and so on). }
\end{figure}

The stability of observer (\ref{eq:lifted_observer_kinematics}) is analysed as follows. The Lyapunov function is defined as
\begin{equation}
\label{eq:Lyapunov}
    \mathcal{L} = (1 - \hat{a}^\top a) + (1 - \hat{m}^\top m) + \frac{1}{2 k_2}|\Omega - \hat{\Omega}|^2 .
\end{equation}

Let $\alpha = ((m \times \hat{m}) + (a \times \hat{a}))$, $\tilde{\Omega} = \Omega - \hat{\Omega}$.
Computing the derivative of the (output) measurements, we get
\begin{align*}
    \dot{a} &= - \Omega^{\times} a \\
    \dot{\hat{a}} &= - (\hat{\Omega} + k_1 \alpha)^{\times} \hat{a} \\
    \dot{m} &= - \Omega^{\times} m \\
    \dot{\hat{m}} &= - (\hat{\Omega} + k_1 \alpha)^{\times} \hat{m}
\end{align*}

From (\ref{eq:hatX_to_hatx}), computing the time derivatives, the proposed observer in $\SE(3)$ when projected back into the original state space in $\calM$ has kinematics
\begin{align*}
    \dot{\hat{R}} &= R_0 \dot{\hat{Q}} \\
    &= R_0 \left( (\Omega_0 + \hat{q})^{\times} \hat{Q} + (\hat{Q} \hat{\pi})^{\times} \hat{Q} \right) \\
    &= R_0 \left( \hat{Q} (\hat{Q}^\top(\Omega_0 + \hat{q}))^{\times} + \hat{Q}(\hat{\pi})^{\times} \right) \\
    &= \hat{R} (\hat{\Omega} + \hat{\pi})^{\times}
\end{align*}
and,
\begin{align*}
    \dot{\hat{\Omega}} &= \dot{\hat{Q}}^\top (\Omega_0 + \hat{q}) + \hat{Q}^\top \dot{\hat{q}} \\
    &= - (\hat{Q}^\top (\hat{Q} \hat{\pi})^{\times} + \hat{Q}^\top (\Omega_0 + \hat{q})^{\times})(\Omega_0 + \hat{q}) \\
    &\hspace{0.5cm}+ \hat{Q}^\top ((\hat{Q} \hat{\pi})^{\times} (\Omega_0 + \hat{q}) + \hat{Q} \hat{\theta}) \\
    &= - \hat{Q}^\top (\hat{Q} \hat{\pi})^{\times} (\Omega_0 + \hat{q}) + \hat{Q}^\top (\hat{Q} \hat{\pi})^{\times} (\Omega_0 + \hat{q}) + \hat{\theta} \\
    &= \hat{\theta}
\end{align*}

The derivative of the Lyapunov function is
\begin{align*}
    \dot{\mathcal{L}} &= -  (\dot{\hat{a}}^\top a + \hat{a}^\top \dot{a} + \dot{\hat{m}}^\top m + \hat{m}^\top \dot{m}) + \frac{1}{k_2} (\Omega - \hat{\Omega})^\top(\dot{\Omega} - \dot{\hat{\Omega}}) \\
    &= -  (\hat{a}^\top (\hat{\Omega} + k_1 \alpha)^{\times} a - \hat{a}^\top \Omega^{\times} a + \hat{m}^\top (\hat{\Omega} + k_1 \alpha)^{\times} m - \hat{m}^\top \Omega^{\times} m) \\
    & \hspace{0.5cm}+ \frac{1}{k_2}(\Omega - \hat{\Omega})^\top(- k_2 \alpha) \\
    &= -  (\hat{a}^\top (k_1 \alpha + \tilde{\Omega})^{\times} a + \hat{m}^\top (k_1 \alpha + \tilde{\Omega})^{\times} m) + \tilde{\Omega}^\top \alpha \\
    &= -  (-\hat{a}^\top a^{\times} (k_1 \alpha + \tilde{\Omega}) - \hat{m}^\top m^{\times} (k_1 \alpha + \tilde{\Omega})) + \tilde{\Omega}^\top \alpha \\
    &= -  ((a \times \hat{a})^\top (k_1 \alpha + \tilde{\Omega}) + (m \times \hat{m})^\top (k_1 \alpha + \tilde{\Omega})) + \tilde{\Omega}^\top \alpha \\
    &= - ((a \times \hat{a} + m \times \hat{m})^\top (k_1 \alpha + \tilde{\Omega})) + \tilde{\Omega}^\top \alpha \\
    &= -  k_1 (\alpha^\top \alpha) - \alpha^\top \tilde{\Omega} + \tilde{\Omega}^\top \alpha \\
    &= -  k_1 ||\alpha||^2 \\
    &= -  k_1 ||(m \times \hat{m}) + (a \times \hat{a})||^2
\end{align*}

We define $\Tilde{R} \triangleq \hat{R} R^\top$, and use identities
\begin{align*}
    a^\top b = \frac{1}{2} tr({a^{\times}}^\top b^{\times}), \\
    (a \times b)^{\times} = b a ^\top - a b^\top.
\end{align*}
Then,
\begin{align}
\label{eq:observer1_dLyapunov}
    \dot{\mathcal{L}} &= -  k_1 ||R^\top (\mathring{m} \times (\Tilde{R}^\top \mathring{m}) + \mathring{a} \times (\Tilde{R}^\top \mathring{a}))||^2  \nonumber \\
    &= -  k_1 ||\mathring{m} \times (\Tilde{R}^\top \mathring{m}) + \mathring{a} \times (\Tilde{R}^\top \mathring{a})||^2  \nonumber \\
    &= -\frac{1}{2}  k_1 tr((\Tilde{R}^\top \mathring{m} \mathring{m}^\top - \mathring{m} \mathring{m}^\top \Tilde{R} + \Tilde{R}^\top \mathring{a} \mathring{a}^\top - \mathring{a} \mathring{a}^\top \Tilde{R})^{\times \top} \nonumber \\
    &\hspace{0.5cm}(\Tilde{R}^\top \mathring{m} \mathring{m}^\top - \mathring{m} \mathring{m}^\top \Tilde{R} + \Tilde{R}^\top \mathring{a} \mathring{a}^\top - \mathring{a} \mathring{a}^\top \Tilde{R})^{\times})  \nonumber \\
    &= -\frac{1}{2}  k_1 tr((\Tilde{R}^\top M - M \Tilde{R})^{\times \top} (\Tilde{R}^\top M - M \Tilde{R})^{\times})  \nonumber \\
    &= - k_1 ((\Tilde{R}^\top M - M \Tilde{R})^\top (\Tilde{R}^\top M - M \Tilde{R}))  \nonumber \\
    &= - k_1 ||\Tilde{R}^\top M - M \Tilde{R}||^2
\end{align}
with $M = \mathring{m} \mathring{m}^\top + \mathring{a} \mathring{a}^\top$ (a positive definite matrix), is negative semi-definite. This proves that the estimation error terms are bounded. Analogously to Theorem 4.2 in~\cite{Mahony08}, the computation of the second time derivative of the Lyapunov function
\begin{align*}
\ddot{\mathcal{L}} = -k_1 (\Tilde{R}^\top M - M \Tilde{R})^\top (\dot{\Tilde{R}}^\top M - M \dot{\Tilde{R}}),
\end{align*}
where we assume $\Omega$ is bounded (matter cannot exceed the speed of light), such that $\tilde{\Omega}$ is also bounded. It is clear that $(\tilde{R}\mathring{m} \times \mathring{m} + \tilde{R}\mathring{a} \times \mathring{a})$ is bounded. 
With direct application of Barbalat's lemma, proves that $\dot{\mathcal{L}}$ asymptotically converges to zero, and hence
\begin{align}
    \Tilde{R}^\top M &= M \Tilde{R}. \label{eq:RMMR}
\end{align}
This in turn implies that the equilibrium $(\tilde{R}, \tilde{\Omega})=(I,0)$ is asymptotically stable. The unstable equilibrium set is $(\{(\tilde{R}, \tilde{\Omega})= (\{\tilde{R} | U_0 D_i {U_0}^\top \}, 0)$ with $D_i$ and $U_0$ defined in Theorem 5.1 of~\cite{Mahony08}. 

The proof of the local exponential stability for the stable equilibrium $(\tilde{R}, \tilde{\Omega}) = (I,0)$, is the same as the proof of item 2 of Theorem 5.1 \cite{Mahony08} when using $(\tilde{R}, R\tilde{\Omega})$ as state variables. 

\end{proof}
Note that, the observer error on $\calM$ has local uniform exponential convergence, and almost global asymptotic convergence property. Similarly, regardless of the chosen reference $(R_0, \Omega_0)$, our proposed observer error on $\SE(3)$ has local uniform exponential convergence, and almost global asymptotic convergence property.

\section{SIMULATION RESULTS}
We evaluate the performance of our proposed observers using simulation. We consider an object with four accelerometers and one magnetometer placed at known locations as shown in Fig.~\ref{fig:Tso(3)}, with the distance from the centre $l = 1$. The sensors have simulated additive Gaussian noise with standard deviation of $0.3 m/s^2$ and $0.3$ on all axes for accelerometer measurement and the magnetic field direction respectively. The chosen simulation time step $dt = 0.001 s$, which corresponds to sensors' sampling frequency of $1 kHz$.

The object is rotating at non-constant angular velocity, following an arbitrarily chosen function of $\Omega = (\sin(0.1 t), \cos(0.1 t), 1)^\top$. The ground truth attitude is then generated by integration where $R(k+1) = R(k)\exp(\Omega^{\times} dt)$.

The observer gains used are $k_1 = 3$, $k_2 = 1$. The observer on $\SE(3)$ have $(R_0, \Omega_0)$ set to random rotational matrix and random $3 \times 1$ vector respectively, while $\hat{X}(0) = (\hat{Q}_0,\hat{q}_0)$ are also set to random rotational matrix and random $3 \times 1$ vector respectively. 

\begin{figure}[ht]
\centering
\includegraphics[width=0.49\columnwidth]{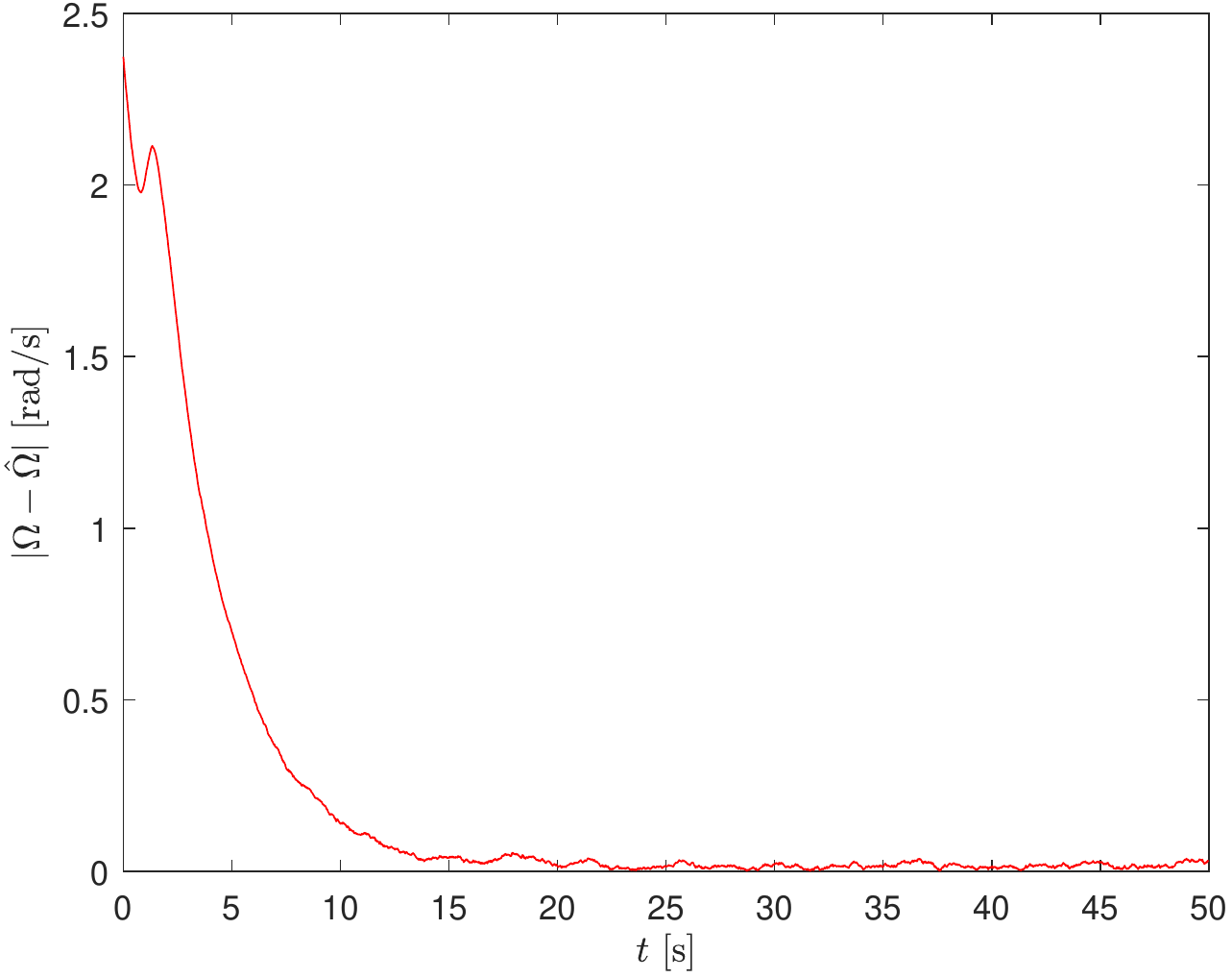}
\includegraphics[width=0.49\columnwidth]{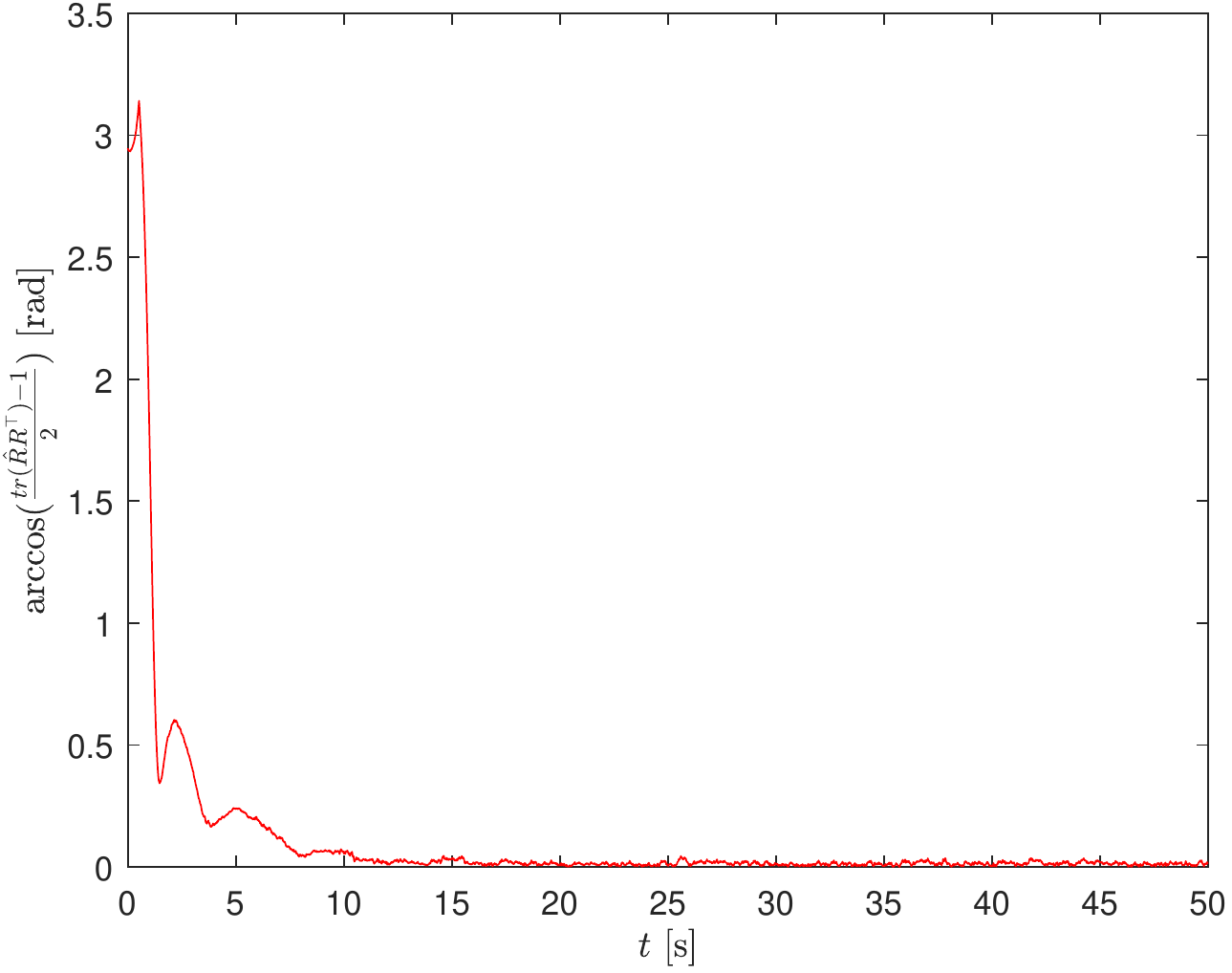}
\caption{\label{fig:error}Time evolution of error showing convergence of estimates $(\hat{R}, \hat{\Omega}^{\times}) = \phi((\hat{Q},\hat{q}), (R_0, (\Omega_0)^{\times}))$ to ground truth $(R, \Omega^\times)$. From left to right: (a) angular velocity error; (b) attitude error}
\end{figure}

\begin{figure}[ht]
\centering
\includegraphics[width=0.6\columnwidth]{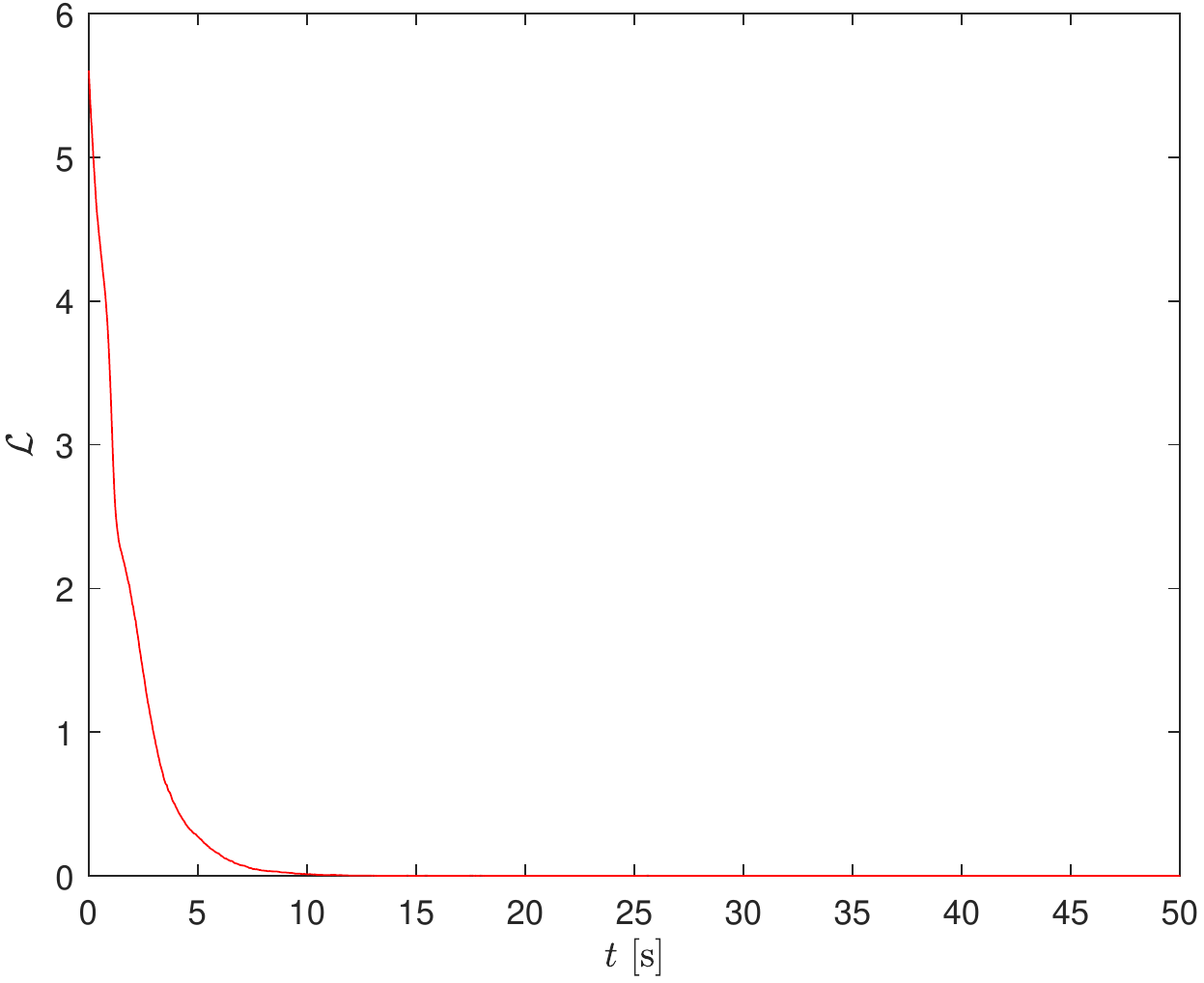}
\caption{\label{fig:Lyapunov}Time evolution of Lyapunov function $\mathcal{L}$ showing local exponential convergence to zero. }
\end{figure}

\section{CONCLUSIONS}
An equivariant observer for the second order attitude kinematics was presented. The observer does not rely on angular velocity input commonly used by existing methods. Instead, a novel sensor modality is proposed, which uses four or more accelerometer to obtain the measurement of angular acceleration.
The observer works on the symmetry group of the second order attitude state space, which has a similar structure as the well known special Euclidean group $\SE(3)$.
The observer is shown to exhibit local uniform exponential convergence, and almost global asymptotic convergence to zero estimation error in the presence of sensor noise, regardless of the chosen reference state $(R_0, \Omega_0)$.

Possible extensions of this work include the study of the effect of sensor bias, similar to the work presented in~\cite{Mahony08}. A nonlinear second order pose observer can also be explored, where the position and linear velocity are also simultaneously estimated. This may be achieved by incorporating GPS measurements or bearing measurements from camera sensors.


\bibliographystyle{plain}

\begin{thebibliography}{10}

	\bibitem{Ahmed14}
	Shakil Ahmed and Eric~C Kerrigan.
	\newblock Suboptimal predictive control for satellite detumbling.
	\newblock {\em Journal of Guidance, Control, and Dynamics}, 37(3):850--859,
	  2014.
	
	\bibitem{Berkane18}
	Soulaimane Berkane, Abdelkader Abdessameud, and Abdelhamid Tayebi.
	\newblock Hybrid output feedback for attitude tracking on so(3).
	\newblock {\em IEEE Transactions on Automatic Control}, 63(11):3956--3963,
	  2018.
	
	\bibitem{2000_Bhat_SCL}
	Sanjay~P. Bhat and Dennis~S. Bernstein.
	\newblock A topological obstruction to continuous global stabilization of
	  rotational motion and the unwinding phenomenon.
	\newblock {\em Systems \& Control Letters}, 39(1):63--70, 2000.
	
	\bibitem{Bonnabel08}
	Silvere Bonnabel, Philippe Martin, and Pierre Rouchon.
	\newblock Symmetry-preserving observers.
	\newblock {\em IEEE Transactions on Automatic Control}, 53(11):2514--2526,
	  2008.
	
	\bibitem{Bonnable09}
	Silv{\`e}re Bonnable, Philippe Martin, and Erwan Sala{\"u}n.
	\newblock Invariant extended kalman filter: theory and application to a
	  velocity-aided attitude estimation problem.
	\newblock In {\em Proceedings of the 48h IEEE Conference on Decision and
	  Control (CDC) held jointly with 2009 28th Chinese Control Conference}, pages
	  1297--1304. IEEE, 2009.
	
	\bibitem{Foxlin96}
	Eric Foxlin.
	\newblock Inertial head-tracker sensor fusion by a complimentary separate-bias
	  kalman filter.
	\newblock In {\em vrais}, page 185. IEEE, 1996.
	
	\bibitem{Han11}
	Songlai Han and Jinling Wang.
	\newblock A novel method to integrate imu and magnetometers in attitude and
	  heading reference systems.
	\newblock {\em The Journal of Navigation}, 64(4):727--738, 2011.
	
	\bibitem{Hua14}
	Minh-Duc Hua, Guillaume Ducard, Tarek Hamel, Robert Mahony, and Konrad Rudin.
	\newblock Implementation of a nonlinear attitude estimator for aerial robotic
	  vehicles.
	\newblock {\em IEEE Transactions on Control Systems Technology},
	  22(1):201--213, 2014.
	
	\bibitem{Lee10}
	Taeyoung Lee, Melvin Leok, and N~Harris McClamroch.
	\newblock Geometric tracking control of a quadrotor uav on se (3).
	\newblock In {\em 49th IEEE conference on decision and control (CDC)}, pages
	  5420--5425. IEEE, 2010.
	
	\bibitem{Lefferts82}
	Ern~J Lefferts, F~Landis Markley, and Malcolm~D Shuster.
	\newblock Kalman filtering for spacecraft attitude estimation.
	\newblock {\em Journal of Guidance, Control, and Dynamics}, 5(5):417--429,
	  1982.
	
	\bibitem{Lizarralde96}
	Fernando Lizarralde and John~T Wen.
	\newblock Attitude control without angular velocity measurement: A passivity
	  approach.
	\newblock {\em IEEE transactions on Automatic Control}, 41(3):468--472, 1996.
	
	\bibitem{Magnis14}
	Lionel Magnis and Nicolas Petit.
	\newblock Estimation of 3d rotation for a satellite from sun sensors.
	\newblock {\em IFAC Proceedings Volumes}, 47(3):10004--10011, 2014.
	
	\bibitem{Magnis17}
	Lionel Magnis and Nicolas Petit.
	\newblock Angular velocity nonlinear observer from vector measurements.
	\newblock {\em Automatica}, 75:46--53, 2017.
	
	\bibitem{Mahony08}
	Robert Mahony, Tarek Hamel, and Jean-Michel Pflimlin.
	\newblock Nonlinear complementary filters on the special orthogonal group.
	\newblock {\em IEEE Transactions on automatic control}, 53(5):1203--1217, 2008.
	
	\bibitem{Mahony09}
	Robert Mahony, Tarek Hamel, Jochen Trumpf, and Christian Lageman.
	\newblock Nonlinear attitude observers on so (3) for complementary and
	  compatible measurements: A theoretical study.
	\newblock In {\em Proceedings of the 48h IEEE Conference on Decision and
	  Control (CDC) held jointly with 2009 28th Chinese Control Conference}, pages
	  6407--6412. IEEE, 2009.
	
	\bibitem{Mahony13}
	Robert Mahony, Jochen Trumpf, and Tarek Hamel.
	\newblock Observers for kinematic systems with symmetry.
	\newblock {\em IFAC Proceedings Volumes}, 46(23):617--633, 2013.
	
	\bibitem{Markley05}
	F~Landis Markley, John Crassidis, and Yang Cheng.
	\newblock Nonlinear attitude filtering methods.
	\newblock In {\em AIAA Guidance, Navigation, and Control Conference and
	  Exhibit}, page 5927, 2005.
	
	\bibitem{Martin10}
	Philippe Martin and Erwan Sala{\"u}n.
	\newblock Design and implementation of a low-cost observer-based attitude and
	  heading reference system.
	\newblock {\em Control Engineering Practice}, 18(7):712--722, 2010.
	
	\bibitem{Richter16}
	Charles Richter, Adam Bry, and Nicholas Roy.
	\newblock Polynomial trajectory planning for aggressive quadrotor flight in
	  dense indoor environments.
	\newblock In {\em Robotics Research}, pages 649--666. Springer, 2016.
	
	\bibitem{Setoodeh04}
	Peyman Setoodeh, Alireza Khayatian, and Ebrahim Frajah.
	\newblock Attitude estimation by separate-bias kalman filter-based data fusion.
	\newblock {\em The Journal of Navigation}, 57(2):261--273, 2004.
	
	\bibitem{Shou10}
	Ho-Nien Shou, Jia-Shing Sheu, and Jyh-Haw Wang.
	\newblock Micro-satellite detumbling mode attitude determination and control:
	  Ukf approach.
	\newblock In {\em IEEE ICCA 2010}, pages 673--678. IEEE, 2010.
	
	\bibitem{Stuelpnagel64}
	John Stuelpnagel.
	\newblock On the parametrization of the three-dimensional rotation group.
	\newblock {\em SIAM review}, 6(4):422--430, 1964.
	
	\bibitem{Zhang12}
	YiFeng ZHANG and Rong XIONG.
	\newblock Real-time vision system for a ping-pong robot.
	\newblock {\em Scientia Sinica Informationis}, 42(9):1115--1129, 2012.
	
\end{thebibliography}

\end{document}